\newcommand{\ket}[1]{|#1\rangle}
\newcommand{\bra}[1]{\langle#1|}
\newcommand{\braket}[2]{\langle#1|#2\rangle}
\newcommand{\proj}[1]{|#1\rangle\langle#1|}
\newcommand{\ie}{{\it{i.e.}~}}
\newcommand{\rA}{\mathrm{A}}
\newcommand{\rB}{\mathrm{B}}
\newcommand{\rE}{\mathrm{E}}
\newcommand{\obs}{\mathrm{obs}}
\newcommand{\id}{\mathbb{I}}
\newcommand{\rg}{\mathrm{guess}}
\newcommand{\LHS}{\mathrm{LHS}}
\DeclareMathOperator{\tr}{tr}
\newtheorem{myfact}{Fact}
\newtheorem{myresult}{Result}
\begin{document}

\title{Maximal randomness expansion from steering inequality violations using qudits}

\author{Paul Skrzypczyk}
\affiliation{H. H. Wills Physics Laboratory, University of Bristol, Tyndall Avenue, Bristol, BS8 1TL, United Kingdom}

\author{Daniel Cavalcanti}
\affiliation{ICFO-Institut de Ciencies Fotoniques, The Barcelona Institute of Science and Technology, 08860 Castelldefels (Barcelona), Spain}

\begin{abstract}
We consider the generation of randomness based upon the observed violation of an Einstein-Podolsky-Rosen (EPR) steering inequality, known as one-sided device-independent randomness expansion. We show that in the simplest scenario -- involving only two parties applying two measurements with $d$ outcomes each -- that there exist EPR steering inequalities whose maximal violation certifies the maximal amount of randomness, equal to $\log(d)$ bits. We further show that all pure partially entangled full-Schmidt-rank states in all dimensions can achieve maximal violation of these inequalities, and thus lead to maximal randomness expansion in the one-sided device-independent setting. More generally, the amount of randomness that can be certified is given by a semidefinite program, which we use to study the behaviour for non-maximal violations of the inequalities. 
\end{abstract}

 \maketitle

Randomness is an important resource which has wide-spread use, ranging from Monte Carlo simulations to cryptographic keys. Generating `good' randomness is a notoriously difficult task, where the notion of `good' depends precisely on the context -- whether it be the absence of subtle correlations which might lead to false conclusions in Monte Carlo simulations, or being perfectly uncorrelated from an adversary, and therefore private, in a cryptographic setting. In a classical, and therefore deterministic setting, there are two main approaches to generating randomness. The first is to use pseudo-random-number-generators, which are able to turn a small amount of initial randomness (a seed), in a much larger string of numbers which appear random, under reasonable assumptions about the computational power available to test their quality. The second is to use chaotic systems, whose long-time behaviour is essentially impossible to predict without perfect knowledge of the initial conditions. 

Quantum theory, as a fundamentally non-deterministic theory, provides an alternative route \cite{AM16,H-CG-E17,BAK+17}. As a simple example, which way a photon takes after passing through a balanced beam splitter is a fundamentally probabilistic event, and thus serves as a basic quantum random number generator. Quantum theory, as a fundamentally nonlocal theory \cite{Bel64}, also provides a second route to randomness, which is much stronger than the first. Nonlocality is necessarily accompanied by uncertainty, with the latter providing the mechanism by which nonlocal effects can be possible at all without leading to signalling, i.e. to observable influences at a distance, which are forbidden by relativity. Thus observing nonlocality ensures that randomness is present, and due to monogamy -- the fact that nonlocality cannot be simultaneously shared among multiple systems-- this randomness must also have an element of privacy. 

Bell nonlocality \cite{Bel64,BCP+14} is the most famous form of quantum nonlocality, and considers the correlations between distant measurement outcomes. From a modern perspective it is understood to constitute a device-independent (DI) form of nonlocality, since it does not rely on any characterisation or trust of the underlying quantum state or measuring devices used. Device-independent randomness expansion (DIRE) \cite{Col06,PAM+10} was the first application of nonlocality to randomness generation. A user, who has access to a pair of devices, uses an initial seed of randomness to choose the measurement settings in a Bell test. If a Bell inequality is violated, the measurement outcomes are guaranteed to contained randomness, with the exact amount being a function of the observed violation. Randomness is said to be expanded as an initial seed is converted into a larger amount of randomness \cite{footnote1}. The original scheme of Ref. \cite{PAM+10} was able to achieve quadratic randomness expansion, while later, more sophisticated schemes, have now been shown to achieve better expansion, including exponential and unbounded expansion \cite{CK11,VV12,CY13,MS16,CSW14,MS17,A-FRV16}. 

Einstein-Podolsky-Rosen (EPR) steering \cite{EPR35} is a second form of quantum nonlocality, that considers the correlations between measurement outcomes of one party, and the states prepared (or `steered') for a second party. From a modern perspective  it is understood to constitute a one-sided-device-independent (1SDI) form of nonlocality, since it only relies on the characterisation of one set of measuring devices \cite{WJD07}. Randomness expansion can also be considered in the 1SDI setting (1SDIRE), with the user generating randomness from the measurement outcomes of the uncharacterised device \cite{BCW+12,LTB+14,PCS+15}. Nonlocality in the form of EPR steering is more robust than Bell nonlocality to imperfections such as loss and noise, and this in turns translates into better robustness for 1SDIRE. Thus, if the use of the 1SDI paradigm can be justified over the DI setting, which will typically depend upon the specific details of any actual implementation, then one can expect to obtain advantages for randomness expansion. 

Here we demonstrate an advantage of using the 1SDI setting, by demonstrating that unbounded randomness expansion can be achieved using the simplest form of steering inequality (something that has not been observed in the DI scenario). We consider a linear steering test with only two choices of measurements that is naturally tailored to $d$-dimensional systems. We show that a maximal violation (which can be achieved using any pure full-rank entangled qudit) is only consistent with uniformly random outcomes, hence leading to $\log_2 d$ bits of private randomness certified in the 1SDI setting. The amount of randomness generated from the initial seed can thus be arbitrarily large as $d \to \infty$.

Consider a situation where a user has two devices, labeled $\rA$ and $\rB$. Device $\rA$ accepts an input $x$, labeling a choice of measurement, and produces an outcome $a$, labeling an outcome. The device accepts one out of $n$ inputs, $x \in \{0,\ldots, n-1\}$, and produces one out of $d$ outcomes, $a \in \{0,\ldots,d-1\}$. Apart from the observable  input-output behaviour of the device, no further characterisation will be assumed. Device $\rB$, on the other hand, will be assumed to be fully characterised. In particular, it will be assumed that the dimension of the system is known, and that known measurements can be performed. In particular, full tomography of the states of system $\rB$ could be performed, although in general this is not necessary. 

The information obtained in this scenario can be summarised by the conditional (unnormailsed) states prepared for system $\rB$, conditioned on the different measurement choices $x$ and outcomes $a$ of system $\rA$:
\begin{equation}
\sigma_{a|x}^\rB = \tr_\rA[(M_{a|x}\otimes \id)\rho^{\rA\rB}],
\end{equation} 
where $\rho^{\rA\rB}$ is the (unknown) state shared between the two devices and $M_{a|x}$ are the (unknown) measurement operators applied in $\rA$. The set $\{\sigma_{a|x}^B\}$ is usually refereed as to the assemblage \cite{Pus13,CS17}. Notice that one can recover both the conditional probability distributions, $p(a|x) = \tr[\sigma_{a|x}]$ and the normalised states $\rho^\rB_{a|x} = \sigma_{a|x}/p(a|x)$. 

EPR steering is observed when the assemblage cannot be explained by a classical mechanism, called a \emph{local-hidden-state} (LHS) model (see \cite{CS17} for a review), which is witnessed by the violation of EPR steering inequalities of the form

\begin{equation}
 \beta := \tr \sum_{a,x} F_{a|x}^B\sigma_{a|x}^B \leq \beta^\LHS.
\end{equation}
$\{F_{a|x}\}_{a,x}$ is a collection of Hermitian operators, that should be measured by device $\rB$ in the case of measurement $x$ and outcome $a$, and $\beta^\LHS := \max_{\sigma_{a|x}^\LHS} \tr\sum_{a,x} F_{a|x}\sigma_{a|x}^\LHS$ is the maximal value of $\beta$ that can be obtained by any classical assemblage. 

When an assemblage violates an EPR steering inequality (\ie $\beta>\beta^\LHS$) it is impossible that for all $a$ and $x$ that $p(a|x) \in \{0,1\}$. This means that the outcomes of system $A$ must contain some randomness. This randomness can be quantified in the following way \cite{PCS+15}. We assume the presence of an Eavesdropper (Eve) holding a measurement device $\rE$ which might share a tripartite state $\ket{\psi^{\rA\rB\rE}}\bra{\psi^{\rA\rB\rE}}$ with devices $\rA$ and $\rB$. Eve is assumed to know the shared state and the form of the measurements in $\rA$ and $\rB$. Eve's goal is to guess $\rA$'s outcomes when $x = x^*$, which happens successfully with probability
\begin{equation}\label{e:pguess}
	P_\rg(x^*) = \max_{\rho^{\rA\rB}_e, p(e), M_{a|x}} \sum_e p(e)p(a = e | x^*,e),
\end{equation}
where $\rho^{\rA\rB}_e$ is the state, labeled by $e$, that Eve distributes to the devices with probability $p(e)$, and without loss of generality Eve will guess $a = e$ as the outcome of the measurement $x = x^*$ for this particular state. Finally, 
\begin{align}
p(a=e |x^*,e) &= \tr[(M_{a=e|x=x^*} \otimes \id)\rho_e^{\rA\rB}],
\end{align}
is the probability that $a = e$ when the measurement $x = x^*$ is performed on the state $\rho^{\rA\rB}_e$. Crucially, in \eqref{e:pguess}, the maximisation takes place only over those strategies of Eve consistent with the observable data of the user, that is, given the observed violation $\beta^\obs$ of a steering inequality. This constraint is formally given by 
\begin{equation}
\tr\sum_{a,x} F_{a|x}\sum_e p(e) \tr_\rA[(M_{a|x}\otimes \openone) \rho_e^{\rA\rB}] = \beta^\obs.
\end{equation}
The guessing probability $P_\rg(x^*)$ quantifies the optimal probability with which Eve can guess the outcome of device $A$. Whenever the guessing probability is less than unity this implies that Eve cannot perfectly guess the outcome, and hence it is inherently probabilistic (even given Eve's side information). The randomness in the outcomes is quantified by the min-entropy
\begin{equation}\label{e:min entropy}
H_\mathrm{min}(x^*) = -\log P_\rg(x^*).
\end{equation}

\begin{comment}
Finally, as shown in \cite{PCS+15}, it is possible to re-express \eqref{e:pguess} as a SDP \cite{BV04,footnote2}. By defining $\tilde{\rho}_e^{\rA\rB} := p(e)\rho^{\rA\rB}_e$ the subnormalised state sent by Eve, and $\sigma_{a|x}^e := \tr_\rA[(M_{a|x}\otimes \openone)\tilde{\rho}^{\rA\rB}_e]$ the subnormalised assemblage, then \eqref{e:pguess} is equivalent to
\begin{align} \label{e:pguess sdp}
P_\rg(x^*) = \max_{\{\sigma_{a|x}^e\}}&\quad\tr\sum_e\sigma_{a=e|x=x^*}^e \\
\text{s.t.}&\quad \tr\sum_{a,x}F_{a|x}\sum_e \sigma_{a|x}^e =\beta^\obs, \nonumber \\
&\quad \sum_a \sigma_{a|x}^e = \sum_a \sigma_{a|x^*}^e \quad \forall e,x, \nonumber \\
&\quad \tr \sum_{ae} \sigma_{a|x^*}^e = 1, \quad \sigma_{a|x}^e \geq 0 \quad \forall a,e,x. \nonumber
\end{align}
The first constraint enforces consistency of the average assemblage prepared by Eve with the observed steering inequality violation; the second enforces no-signalling, which arises from the fact that $\sum_a M_{a|x} = \openone$ for all $x$, satisfied by all valid measurements; the third enforces that $\sum_e p(e) = 1$; the last constraint simultaneously enforces that the $p(e) \geq 0$ and that the states prepared for system $B$, $\rho^e_{a|x}$ are positive semidefinite operators. In particular, it was shown in \cite{PCS+15} that given any set of assemblages $\{\sigma_{a|x}^e\}_e$ satisfying the above SDP, then one can always find a quantum strategy for Eve $\{p(e), \rho^{\rA\rB}_e, M_{a|x}\}$ which realises them, allowing Eve to guess $A$'s outcomes with the same guessing probability.
\end{comment}

Before proceeding to the main results, one final preliminary fact is needed, which concerns the uniqueness of probability distributions which can arise in the steering scenario. In particular, 

\begin{myfact} Consider two sets of linearly independent states in $\mathbb{C}^d$, $\{\ket{\phi_a}\}_a$ and $\{\ket{\lambda_i}\}_i$, both of which span the Hilbert space. Assume that the expansion coefficients $u_i^a$ and $v_a^i$ do not vanish for all $a$,$i$, where $\ket{\phi_a} = \sum_i u_i^a \ket{\lambda_i}$ and $\ket{\lambda_i} = \sum_a v_a^i \ket{\phi_a}$.  Then there exists unique vectors $\{q_a\}_a$ and $\{\lambda_i\}_i$ (up to normalisation) such that
\begin{equation}
\sum_a q_a \proj{\phi_a} = \sum_i \lambda_i \proj{\lambda_i}.\label{e:ensemble}
\end{equation} 
In particular, up to normalisation, these vectors are given by
\begin{align}\label{e: prob solutions}
q_a &\propto \frac{\braket{\psi_a}{\lambda_i}}{\braket{\phi_a}{\omega_i}},& \lambda_i &\propto \frac{\braket{\phi_a}{\omega_i}}{\braket{\psi_a}{\lambda_i}},
\end{align}
where $\{\ket{\psi_a}\}_a$ and $\{\ket{\omega_i}\}_i$ are the (unique) dual sets of vectors with respect to $\{\ket{\phi_a}\}_a$ and $\{\ket{\lambda_i}\}_i$, satifying $\braket{\psi_b}{\phi_a} = \delta_{ab}$, $\braket{\omega_j}{\lambda_i} = \delta_{ij}$, which always exist due to the linear independence of the original sets \cite{footnote3}. 
\end{myfact}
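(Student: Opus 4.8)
The plan is to probe the claimed identity with the two dual bases, using them as a pair of sieves that isolate a single weight at a time. As a preliminary I would record the two elementary overlaps that drive everything: expanding $\ket{\lambda_i}=\sum_b v_b^i\ket{\phi_b}$ and using $\braket{\psi_a}{\phi_b}=\delta_{ab}$ gives $\braket{\psi_a}{\lambda_i}=v_a^i$, while expanding $\ket{\phi_a}=\sum_j u_j^a\ket{\lambda_j}$ and using $\braket{\omega_i}{\lambda_j}=\delta_{ij}$ gives $\braket{\phi_a}{\omega_i}=\overline{u_i^a}$. These are exactly the quantities appearing in \eqref{e: prob solutions}, so identifying them first makes the closed form transparent.

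Next I would establish the formula and uniqueness simultaneously. Writing $G$ for the common operator in \eqref{e:ensemble}, I sandwich it between $\bra{\psi_a}$ and $\ket{\omega_i}$. On the left-hand decomposition the relation $\braket{\psi_a}{\phi_b}=\delta_{ab}$ collapses the sum to the single term $q_a\braket{\phi_a}{\omega_i}=q_a\overline{u_i^a}$; on the right-hand decomposition $\braket{\lambda_j}{\omega_i}=\delta_{ij}$ collapses it to $\lambda_i\braket{\psi_a}{\lambda_i}=\lambda_i v_a^i$. Equating the two yields the master relation $q_a\,\overline{u_i^a}=\lambda_i\,v_a^i$ for every pair $a,i$. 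Because no coefficient vanishes by hypothesis, this rearranges to $q_a/\lambda_i=v_a^i/\overline{u_i^a}=\braket{\psi_a}{\lambda_i}/\braket{\phi_a}{\omega_i}$, which is precisely \eqref{e: prob solutions} and which fixes every ratio $q_a/q_{a'}$ and $\lambda_i/\lambda_{i'}$. Hence the weights are pinned down up to a single common scale, giving uniqueness up to normalisation.

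For existence I would run the argument in reverse: \emph{define} $q_a$ and $\lambda_i$ by \eqref{e: prob solutions} and verify that $G_1:=\sum_a q_a\proj{\phi_a}$ and $G_2:=\sum_i\lambda_i\proj{\lambda_i}$ coincide. Since $\{\ket{\psi_a}\}$ and $\{\ket{\omega_i}\}$ are each bases, the $d^2$ numbers $\bracket{\psi_a}{X}{\omega_i}$ determine any operator $X$ uniquely (the map $X\mapsto(\bracket{\psi_a}{X}{\omega_i})$ is injective because both dual sets span), so it suffices that $\bracket{\psi_a}{G_1}{\omega_i}=\bracket{\psi_a}{G_2}{\omega_i}$, which is the master relation once more. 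The one genuinely nontrivial point — and the step I expect to be the main obstacle — is that \eqref{e: prob solutions} must assign each $q_a$ a value independent of the auxiliary index $i$ (and each $\lambda_i$ a value independent of $a$), i.e. that $v_a^i/\overline{u_i^a}$ factorises as a function of $a$ times a function of $i$. I would discharge this consistency by feeding in that the two sides of \eqref{e:ensemble} are required to represent one and the same operator, which makes the sandwiched values automatically compatible; I would \emph{not} expect the factorisation to follow from the relations $\sum_i u_i^a v_b^i=\delta_{ab}$ alone, so this compatibility, rather than the algebra of the sieve step, is where the real content sits.
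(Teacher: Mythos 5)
Your central computation is exactly the paper's proof: sandwich \eqref{e:ensemble} between $\bra{\psi_a}$ and $\ket{\omega_i}$, collapse each side using biorthogonality to obtain $q_a\braket{\phi_a}{\omega_i}=\lambda_i\braket{\psi_a}{\lambda_i}$, i.e.\ $q_a\,(u_i^a)^*=\lambda_i\,v_a^i$, and divide using the non-vanishing hypothesis to read off \eqref{e: prob solutions} and the uniqueness of all ratios. (Your identification $\braket{\psi_a}{\lambda_i}=v_a^i$ is the correct one; the paper's proof writes this overlap as the complex conjugate of the coefficient, a harmless slip since only non-vanishing matters.)

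Where you go beyond the paper is in trying to prove \emph{existence} by defining the weights via \eqref{e: prob solutions} and checking consistency. You are right that the crux there is whether $v_a^i/(u_i^a)^*$ factorises as a function of $a$ times a function of $i$, and right to doubt that this follows from $\sum_i u_i^a v_b^i=\delta_{ab}$ alone --- it does not, and a nontrivial common decomposition can genuinely fail to exist for a generic pair of spanning sets (e.g.\ in $\mathbb{C}^2$ with $\ket{\phi_a}$ the computational basis and $\ket{\lambda_i}\in\{(\ket{0}+\ket{1})/\sqrt{2},(\ket{0}+i\ket{1})/\sqrt{2}\}$, only the trivial solution survives). Your proposed fix --- assuming the two sides already represent the same operator --- is circular as an existence proof, but note that the paper does not prove existence either: its one-line argument establishes only the conditional statement (if the two decompositions agree, the weights are forced to be \eqref{e: prob solutions}), which is all that is used later, since in the application the common operator is supplied physically by no-signalling. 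So your proof of the part the paper actually proves is correct and identical in method; the residual gap you flag is real but is shared by, and immaterial to, the paper.
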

\begin{proof}
This claim is proved by left-muliplying \eqref{e:ensemble} by $\bra{\psi_b}$ and right-multiplying by $\ket{\omega_j}$. Note that the denominator of each expression is one of the (complex conjugate) expansion coefficients, $\braket{\phi_a}{\omega_i} = (u_i^a)^*$  and $\braket{\psi_a}{\lambda_i} = (v_a^i)^*$ and hence are non-vanishing by assumption.
\end{proof}

In the case where $q_a \geq 0$ $\forall a$ and $\lambda_i \geq 0$ $\forall i$, then \eqref{e:ensemble} provide two ensemble decompositions of the same density operator. The above fact says that given only the vectors in these two ensembles, the probabilities are uniquely specified. It also says that two sets of linearly independent vectors with non-vanishing expansion coefficients uniquely specify a density operator.

In what follows it will be shown that obtaining the maximal violation of a certain EPR steering inequality involving only two measurements leads to maximal randomness generation. In particular, we will prove the following result: 
\begin{myresult} Consider a steering scenario where device $A$ accepts 2 inputs and produces one of $d$ outcomes, preparing states for system $B$ in $\mathbb{C}^d$. Consider an EPR steering functional with elements
\begin{equation}\label{e:functional}
F_{a|x} = \proj{\phi_{a|x}}
\end{equation}
where $\{\ket{\phi_{a|x}}\}_a$ is a linearly independent set of $d$ states in $\mathbb{C}^d$, for both values of $x$, and such that the expansion coefficients of one set in terms of the other is non-vanishing. The maximal value the steering functional can take is $\beta = 2$, and when this value is observed, the amount of randomness certified for the input $x^*$ is
\begin{equation}
H_\textrm{min}(x^*) = -\log\max_a q_{a|x^*}
\end{equation}
where $\{q_{a|0}\}_a$ and $\{q_{a|1}\}_a$ are the unique probability distributions that satisfy $\sum_a q_{a|0} \proj{\phi_{a|0}} = \sum_a q_{a|1}\proj{\phi_{a|1}}$, as given by \eqref{e: prob solutions}.
\end{myresult}

\begin{proof}
First note that the maximal value of the steering functional $\beta = 2$ can only be achieved by an assemblage with elements $\sigma_{a|x} = q(a|x) \proj{\phi_{a|x}}$. To see this, note that the value of the functional for a general assemblage with elements $\sigma_{a|x} = p(a|x)\rho_{a|x}$ is $\beta = \sum_{a,x} p(a|x) \bra{\phi_{a|x}} \rho_{a|x} \ket{\phi_{a|x}}$. For every $a$ and $x$ such that $p(a|x) \neq 0$, it must be that $\bra{\phi_{a|x}} \rho_{a|x} \ket{\phi_{a|x}} = 1$, otherwise $\beta < 2$. The only choice of $\rho_{a|x}$ that satisfies this is $\rho_{a|x} = \proj{\phi_{a|x}}$. For $a$ and $x$ such that $p(a|x) = 0$, the choice of $\rho_{a|x}$ is arbitrary and can therefore be chosen to be $\proj{\phi_{a|x}}$ without loss of generality.

From Fact 1, it follows that the $q(a|x)$ are uniquely determined, as the only pair of probability distributions $\{q_{a|0}\}_a$ and $\{q_{a|1}\}_a$ which satisfy $\sum_a q_{a|0} \proj{\phi_{a|0}} = \sum_a q_{a|1} \proj{\phi_{a|1}}$ \cite{footnote4}. 

Turning our attention to randomness generation, when a violation $\beta = 2$ is observed, the above implies that the most general strategy of Eve is to prepare assemblages of the form
\begin{equation}\label{e:form 2}
\sigma_{a|x}^e = q(a,e|x)\proj{\phi_{a|x}},
\end{equation}
such that $\sum_e q(a,e|x) = q_{a|x}$. Indeed, the reduced assemblage of the devices of the user, $\sum_e \sigma_{a|x}^e$, by virtue of attaining a maximal violation, from the above must have elements of the form $q_{a|x}\proj{\phi_{a|x}}$. The only way for a sum of operators to be rank-1 is for each element to be proportional to the same rank-1 element, and hence the claim follows. The non-signalling constraint from \eqref{e:pguess} then takes the form
\begin{equation}\label{e:pguess nosig}
\sum_a q(a|e,0)q(e) \proj{\phi_{a|0}} = \sum_a q(a|e,1)q(e)\proj{\phi_{a|1}},
\end{equation}
where we have used no-signalling to write $q(a,e|x) = q(a|e,x)q(e|x) = q(a|e,x)q(e)$. However, \eqref{e:pguess nosig} has the same form as \eqref{e:ensemble} from Fact 1, and hence it must be the case that $q(a|e,x) = q_{a|x}$, due to the uniqueness of the distributions. Crucially this shows that $a$ is conditionally independent of $e$. Therefore, Eve's guessing probability in this case is a simple optimisation over the probability distribution $\{p(e)\}_e$,  given by
\begin{equation}
P_\rg(x^*) = \max_{\{q(e)\}_e} \sum_e p(e) q_{e|x^*} = \max_e q_{e|x^*}.
\end{equation}
Using the definition of the min-entropy \eqref{e:min entropy} the result follows. 
\end{proof}

This shows that Eve can do no better than guess the most probable outcome of device $A$, which is the same as could be achieved without the use of quantum theory.  Moreover, using the above, by considering a situation where $q_{a|x^*} = 1/d$, i.e. where the only assemblage consistent with the violation of the inequality has a uniformly random outcome for the measurement $x^*$, then $H_\mathrm{min}(x^*) = \log(d)$ bits of randomness are certified in a 1SDI scenario. In what follows we show that this can be achieved by making appropriate measurements on on all pure partially entangled Schmidt-rank-$d$ states. 

In particular: 
\begin{myresult}
Consider a pure partially entangled Schmidt-rank-$d$ state in $\mathbb{C}^d \otimes \mathbb{C}^d$, given by $\ket{\Psi} = \sum_i \sqrt{\lambda_i} \ket{i}\ket{i}$, where $\lambda_i > 0$ for all $i$, and $\sum_i \lambda_i = 1$ are the Schmidt coefficients. Consider two measurements with elements $M_{a|0} = \proj{a}$ and $M_{a|1} = F\proj{a}F^\dagger$, where $F$ is the $d$-dimensional discrete Fourier transform. Finally consider an EPR steering functional with elements $F_{a|0} = \proj{a}$ and $F_{a|1} = \proj{\chi_a}$, where $\ket{\chi_a} = \sum_i \sqrt{\lambda_i} \bra{a}F^\dagger \ket{i} \ket{i}/\sqrt{d}$. Then, 
\begin{enumerate}[(i)]
\item Using the measurements $\{M_{a|x}\}_a$ on the state $\ket{\Psi}$ leads to an assemblage which maximally violates the EPR steering functional with elements $F_{a|x}$, i.e. achieves $\beta = 2$.
\item The outcome probabilities are uniformly random for the second measurement of device A,  $p(a|1) = 1/d$ for all $a$.
\end{enumerate}
Together, the above two facts imply that maximal randomness can be certified using this 1SDI randomness certification scheme, $H_\textrm{min}(x = 1) = \log d$. 
\end{myresult}
\begin{proof}
Performing the measurements with elements $M_{a|x}$ on the state $\ket{\Psi}$ leads to an assemblage with elements $\sigma_{a|0} = \lambda_a \proj{a}$ and $\sigma_{a|1} = \frac{1}{d}\proj{\chi_a}$, where we used the fact that $|\bra{i}F\ket{a}|^2 = 1/d$ for all $a,i$ to evaluate $p(a|1) = \tr[F\proj{a}F^\dagger \otimes \mathbb{I})\proj{\Psi}] = \sum_i \lambda_i |\bra{i}F\ket{a}|^2 = 1/d$. This demonstrates the second claim. Direct calculation shows that this assemblage achieves the value $\beta = 2$ for the EPR steering functional given, proving the first claim. 

The set $\{\ket{a}\}_a$ forms an orthonormal basis for $\mathbb{C}^d$. The set $\{\ket{\chi_a}\}_a$ form a non-orthogonal basis for $\mathbb{C}^d$ with dual basis $\ket{\theta_a} = \sqrt{d}\sum_i1/\sqrt{\lambda_i} \bra{i}F^\dagger \ket{a} \ket{i}$. It follows that the expansion coefficients of $\ket{\chi_a}$ in terms of $\ket{a}$, and vice-versa, are non-vanishing. Result 1 can thus be applied, since all of the required conditions hold. In conjunction with the fact that $p(a|1) = 1/d$, this leads to
\begin{equation}
H_\textrm{min}(x=1) = -\log \max_a p(a|1) = \log d.
\end{equation} 
\end{proof}

There are two points worth noting. First, if we consider maximally entangled states, where $\lambda_i = 1/d$, then $p(a|0) = 1/d$ also. In this case, one can naturally consider obtaining randomness from both inputs. Moreover, in this case $F_{a|x} = (\proj{\phi_{a|x}})^\intercal$ form a measurement for each $x$, and hence only two measurements need to be performed at $\rB$, as opposed to $2d$ different measurements in the general case.

Second, if in the above we were to replace the Schmidt-rank-$d$ state with a Schmidt-rank-$k$ state, for $k < d$, then the  analysis can still be applied on the support of the reduced state of system B, which will be rank-$k$, and $H_\text{min}(x=1) = \log k$ bits of randomness will be certified, the maximal possible using projective measurements for such a state. 

Note that in the above analysis we have performed only the ideal analysis, assuming infinite statistics. To implement the above in practice, with only finite statistics, the protocol of \cite{PAM+10}, which is outlined in full detail in \cite{PM13} from the case of DIRE can be applied in this 1SDIRE setting. 

In total, the above demonstrates the power of the 1SDI paradigm for randomness certification. The steering functionals  presented consistute the simplest possible functionals, comprising only two choices of measurement for the uncharacterised/untrusted device. Nevertheless, they are powerful enough to generate the maximal amount of randomness possible when considering projective measurements on a partially entangled state. By allowing the local dimension $d$ to tend to infinity, this scheme can generate an unbounded amount of randomness from this simple scheme.

So far, we have only considered the case of a perfect violation of the steering inequalities, and shown that this leads to maximal randomness certification. Since a perfect violation can never be observed in practice, it is also important to analyse what happens for an arbitrary violation $\beta < 2$. This can be carried out efficiently numerically by solving the SDP presented in the Supplementary Information. This method easily allows one to consider dimensions up to $d = 32$ on a standard desktop computer \cite{footnote5}. As an illustration, in Fig.~\ref{f: exp} we plot the amount of certified randomness as a function of $\beta^\obs$ for $d = 2$ to $14$. 

\begin{figure}[t!]
\centering 
\includegraphics[width=\columnwidth]{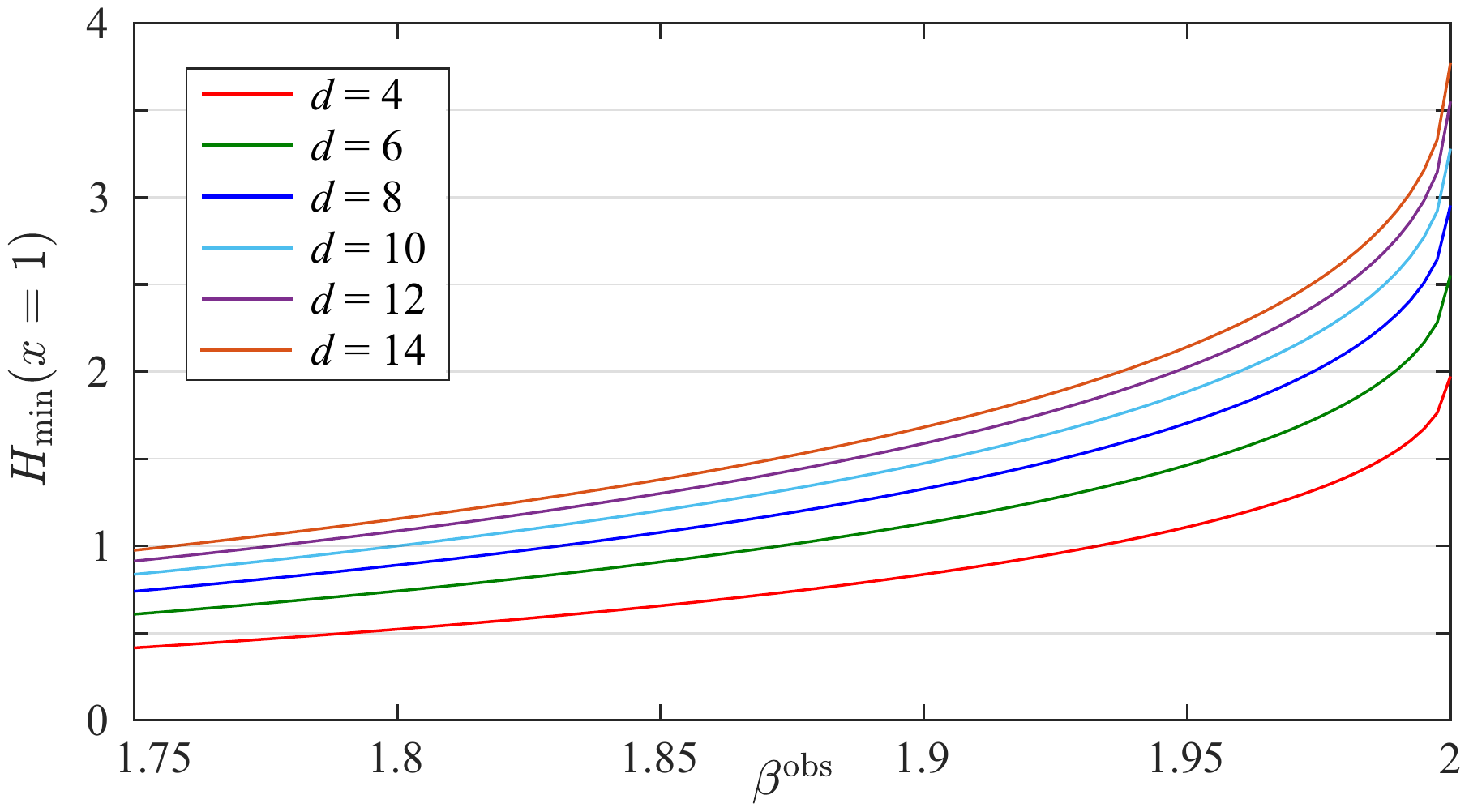}
\caption{Certified randomness of the second measurement, $H_{\min}(x=1)$, as a function of the observed violation of the steering inequality, $\beta^\obs$, for dimension $d = 4$ to $d = 14$. For any fixed violation, the amount of randomness that can be certified increases with dimension $d$. The code used to generate this figure is available at \cite{code}.} 
\label{f: exp}
\end{figure}

In \cite{WPD+18} the steering inequalities presented here were recently tested. In particular, \cite{WPD+18} presented an integrated silicon quantum photonic device with path encoded qudits up to $d = 16$. In the device each photon be prepared in a superposition over up to 16 spatial modes and entanglement can be generated between a pair of photons using coherent and controllable excitation of $d$ integrated identical photon pair sources. Arbitrary projective measurements can also be performed, using an integrated reconfigurable interferometric network. 

This device was used to prepare maximally entangled states in dimensions $d=2$ to $d = 16$, and a steering inequality of the form \eqref{e:functional} was tested, using as the two bases the path basis, and its Fourier transform. The authors demonstrated that higher dimensional systems require lower visibilities (lower inequality violation) to achieve the same amount of randomness.

To conclude, in this work we have considered the task of one-sided-device-independent randomness expansion. We have presented a general construction based on steering functionals in arbitrary dimension in the simplest scenario (consisting of only two inputs to the uncharacterised device). We have shown that a maximal violation of the associated steering inequality certifies that the outcomes of the uncharacterised device are completely unpredictable, even for a potential adversery, and hence maximal private randomness can be certified. We have shown that for every entangled state it is possible to construct a steering functional using our construction that is maximally violated, and hence that all entangled states in arbitrary dimension lead to maximal randomness certification using the simplest possible 1SDI scheme.

In the case of non-maximal violation of the steering inequality the amount of randomness that can be certified in a 1SDI manner can be computed using the technique of semidefinite programming. This provides a feasible method for dimensions up until $d \leq 32$. An important open problem is to obtain analytic lower bounds on the amount of randomness that can be obtained for near perfect violation which apply for arbitrary dimension $d$. This will provide a practical solution for arbitrary experimental situations. A route to achieve this would be to generalise Fact 1 to allow for some uncertainty in the sets of states, and to see how much freedom this allows for in the associated probabiliy distributions, which might be of independent interest. 

In a related direction, it would be interesting to understand what roll loss plays for 1SDI randomness expansion. The inequalities considered here are closely related to those put forward in \cite{SC15} for loss-tolerant EPR steering demonstrations. It would be interesting to extend the analysis here to these inequalities. 

Finally, the construction here is tailored to projective measurements, and hence in dimension $d$, up to $H_\text{min}(0) = \log d$ bits of randomness are certified. By using generalised positive-operator-valued (POVM) measurements it is in principle possible to certify up to $H_\text{min}(x^*) = 2\log d$ bits of randomness, by using measurements with $d^2$ outcomes. An interesting open question is whether the construction presented here can be generalised to this case also. 
 
\emph{Acknowledgements.--}
DC acknowledges the Ram\'on y Cajal fellowship, Spanish MINECO (QIBEQI FIS2016-80773-P and Severo Ochoa SEV-2015-0522), Fundaci\'o Cellex, and Generalitat de Catalunya (SGR875 and CERCA Program). PS acknowledges support from a Royal Society URF (UHQT).

\begin{appendix}

\section{SDP formulation of the guessing probability}
As shown in \cite{PCS+15}, it is possible to re-express the guessing probability presented in the main text as a SDP \cite{BV04,footnote2}. By defining $\tilde{\rho}_e^{\rA\rB} := p(e)\rho^{\rA\rB}_e$ the subnormalised state sent by Eve, and $\sigma_{a|x}^e := \tr_\rA[(M_{a|x}\otimes \openone)\tilde{\rho}^{\rA\rB}_e]$ the subnormalised assemblage, then the $P_\rg(x^*)$ is equivalent to
\begin{align} \label{e:pguess sdp}
P_\rg(x^*) = \max_{\{\sigma_{a|x}^e\}}&\quad\tr\sum_e\sigma_{a=e|x=x^*}^e \\
\text{s.t.}&\quad \tr\sum_{a,x}F_{a|x}\sum_e \sigma_{a|x}^e =\beta^\obs, \nonumber \\
&\quad \sum_a \sigma_{a|x}^e = \sum_a \sigma_{a|x^*}^e \quad \forall e,x, \nonumber \\
&\quad \tr \sum_{ae} \sigma_{a|x^*}^e = 1, \quad \sigma_{a|x}^e \geq 0 \quad \forall a,e,x. \nonumber
\end{align}
The first constraint enforces consistency of the average assemblage prepared by Eve with the observed steering inequality violation; the second enforces no-signalling, which arises from the fact that $\sum_a M_{a|x} = \openone$ for all $x$, satisfied by all valid measurements; the third enforces that $\sum_e p(e) = 1$; the last constraint simultaneously enforces that the $p(e) \geq 0$ and that the states prepared for system $B$, $\rho^e_{a|x}$ are positive semidefinite operators. In particular, it was shown in \cite{PCS+15} that given any set of assemblages $\{\sigma_{a|x}^e\}_e$ satisfying the above SDP, then one can always find a quantum strategy for Eve $\{p(e), \rho^{\rA\rB}_e, M_{a|x}\}$ which realises them, allowing Eve to guess $A$'s outcomes with the same guessing probability.
\end{appendix}

\begin{thebibliography}{100}

\bibitem{AM16} A. Ac\'{i}n and L. Masanes, \textit{Certified randomness in quantum physics}, Nature \textbf{540} 213 (2016).

\bibitem{H-CG-E17} M. Herrero-Collantes, J. C. Garcia-Escartin, \textit{Quantum random number generators}, Rev. Mod. Phys. \textbf{89}, 015004 (2017).

\bibitem{BAK+17} M. N. Bera, A. Ac\'in, M. Ku\'s, M. W. Mitchell, and M. Lewenstein, \textit{Randomness in quantum mechanics: philosophy, physics and technology}, Rep. Prog. Phys. \textbf{80}, 124001 (2017).

\bibitem{Bel64} J. S. Bell, \textit{On the Einstein Podolsky Rosen paradox}, Physics \textbf{1}, 195 (1964).

\bibitem{BCP+14} N. Brunner, D. Cavalcanti, S. Pironio, V. Scarani, S. Wehner, \textit{Bell nonlocality}, Rev. Mod. Phys. {\bf86}, 419 (2014).

\bibitem{Col06} R. Colbeck, PhD thesis, University of Cambridge (2006), arXiv:0911.3814 (2009).

\bibitem{PAM+10} S. Pironio et al., \textit{Random numbers certified by Bell's theorem}, Nature \textbf{464} (7291), 1021-1024 (2010)

\bibitem{footnote1} In fact, under certain natural assumptions \cite{Pir15}, the initial source of randomness can be public, while the randomness of the outcomes is private, hence one can even talk of randomness generation. Here, for simplicity of presentation, we will also talk about randomness expansion

\bibitem{Pir15} S. Pironio, \textit{Random `choices' and the locality loophole},
http://arxiv.org/abs/1510.00248.

\bibitem{CK11} R. Colbeck and A. Kent, \textit{Private randomness expansion with untrusted devices}, J. Phys. A
\textbf{44}, 095305 (2011)

\bibitem{VV12} U. Vazirani and T. Vidick, \textit{Certifiable Quantum Dice.} Phil. Trans. R. Soc. A \textbf{370},
3432 (2012).

\bibitem{CY13} M. Coudron, M. H. Yuen, \textit{Infinite Randomness Expansion and Amplification with a Constant
Number of Devices}, http://arxiv.org/abs/1310.6755.

\bibitem{MS16} C. A. Miller and Y. Shi, \textit{Robust protocols for securely expanding randomness and distributing
keys using untrusted quantum devices}, J. ACM, \textbf{63}, 33 (2016).

\bibitem{CSW14} K. M. Chung, Y. Shi and X. Wu, \textit{Physical Randomness Extractors: Generating Random
Numbers with Minimal Assumptions}, http://arxiv.org/abs/1402.4797.

\bibitem{MS17} C. A. Miller and Y. Shi, \textit{Universal security for randomness expansion from the spot-checking
protocol}, SIAM J. Computing \textbf{46}, 1304 (2017).

\bibitem{A-FRV16} R. Arnon-Friedman, R. Renner and T. Vidick, \textit{Simple and tight device-independent security
proofs}, http://arxiv.org/abs/1607.01797.

\bibitem{EPR35} A. Einstein, B. Podolsky, N. Rosen, \textit{Can Quantum-Mechanical Description of Physical Reality Be Considered Complete?} Phys. Rev. \textbf{47}, 777 (1935).

\bibitem{WJD07} H. M. Wiseman, S. J. Jones, and A. C. Doherty, \textit{Steering, Entanglement, Nonlocality, and the Einstein-Podolsky-Rosen Paradox
}, Phys. Rev. Lett. {\bf98} (2007).

\bibitem{CS17} D. Cavalcanti and P. Skrzypczyk, \textit{Quantum steering: a review with focus on semidefinite programming}, Rep. Prog. Phys. \textbf{80} 024001 (2017).

\bibitem{BCW+12} C. Branciard, E.G. Cavalcanti, S. P. Walborn, V. Scarani, and H. M. Wiseman,\textit{One-sided
Device-Independent Quantum Key Distribution: Security, feasibility, and the connection
with steering}, Phys. Rev. A \textbf{85}, 010301(R) (2012).

\bibitem{LTB+14} Y. Z. Law, L. P. Thinh, J. D. Bancal, V. Scarani, \textit{Quantum randomness extraction for various levels of characterization of the devices}, J. Phys. A: Math. Theor. {\bf 47}, 424028 (2014).

\bibitem{PCS+15} E. Passaro, D. Cavalcanti, P. Skrzypczyk, and A. Ac\'{i}n, \textit{Optimal randomness certification in the quantum steering and prepare-and-measure scenarios}, New J. Phys. \textbf{17}, 113010 (2015).

\bibitem{Pus13} M. F. Pusey, \textit{Negativity and steering: a stronger Peres conjecture}, Phys. Rev. A \textbf{88}, 032313 (2013).

\bibitem{footnote3}Note that the right hand side of each equation depends on both $i$ and $a$. However, after normalising, it is indeed the case that the correct dependence is obtained, i.e. the right-hand-side of the left-hand equation becomes independent of $i$, and vice versa.

\bibitem{footnote4} Note that Fact 1 did not guarantee that these are probability vectors. Here we will restrict to situations where this is the case. These are in fact the only situations that can arise in a steering experiment.

\bibitem{PM13} S. Pironio, S. Massar, \textit{Security of practical private randomness generation}, Phys. Rev. A. \textbf{87}, 012336 (2013).

\bibitem{footnote5} Using using the {\sc cvx} package \cite{cvx} for {\sc Matlab}, less than 10 minutes was required to obtain a solution.


\bibitem{cvx} M. Grant, S. Boyd, \textit{CVX: Matlab Software for Disciplined Convex Programming}, version 2.1 \texttt{http://cvxr.com/cvx} (2004)

\bibitem{code} Code available at https://git.io/vxkmL

\bibitem{WPD+18} J. Wang \textit{et. al.}, \textit{Multidimensional quantum entanglement with large-scale integrated optics}, Science 10.1126/science.aar7053 (2018); arXiv:1803.04449

\bibitem{SC15} P. Skrzypczyk, D. Cavalcanti, \textit{Loss-tolerant Einstein-Podolsky-Rosen steering for arbitrary-dimensional states: Joint measurability and unbounded violations under losses}, Phys. Rev. A. \textbf{92}, 022354 (2015).

\bibitem{BV04} S. Boyd and L. Vandenberghe, \textit{Convex optimization}, Cambridge University Press (2004).

\bibitem{footnote2} In \cite{PCS+15} randomness certified directly by the assemblage was considered, however it is a straightforward variant to consider instead a fixed steering inequality violation. 

\end{thebibliography}
\end{document}